\numberwithin{equation}{section}
\newcommand{\margnote}[1]{
\ifthenelse{\boolean{shownotes}}%
{\marginpar{\raggedright\tiny\texttt{#1}}}%
{}%
}
\newcommand{\hole}[1]{
\ifthenelse{\boolean{shownotes}}%
{\begin{center} \fbox{ \rule {.25cm}{0cm}
\rule[-.1cm]{0cm}{.4cm} \parbox{.85\textwidth}{\begin{center}
\texttt{#1}\end{center}} \rule {.25cm}{0cm}}\end{center}}
{}
}
\theoremstyle{plain}
\newtheorem{theo}{Theorem}[section]
\newtheorem{lema}{Lemma}
\newtheorem{coro}{Corollary}
\newtheorem{rem}{Remark}
\theoremstyle{remark}
\theoremstyle{remark}
\begin{document}

\title[Non-hyperbolicity of the Cattaneo-Christov system]{Non-hyperbolicity of the inviscid Cattaneo-Christov system for compressible fluid flow in several space dimensions}

\author[F. Angeles]{Felipe Angeles}
 
\address{{\rm (F. Angeles)} Posgrado en Ciencias Matem\'{a}ticas\\Universidad Nacional Aut\'{o}noma 
de M\'{e}xico\\Circuito Exterior s/n, Ciudad de M\'{e}xico C.P. 04510 (Mexico)}

\email{teojkd@ciencias.unam.mx}

\begin{abstract}
We consider the coupling between the equations of motion of a compressible fluid in two and three space dimensions with Christov's equation for the heat flux. Christov's equation is a frame indifferent formulation of the classical model of Cattaneo that allows for the heat flux to be eliminated to obtain a single hyperbolic equation for the temperature. The obtained system is written in quasilinear form for the state variables density, velocity, temperature and  heat flux. It is then shown that this system is not of hyperbolic type as consequence of the presence of the Lie-Oldroyd upper-convected material derivative involved in Christov's formulation.
\end{abstract}
\keywords{Cattaneo Christov systems; compressible flow; hyperbolic quasilinear systems; thermal waves}
\maketitle

\setcounter{tocdepth}{1}

\section{Introduction}
Consider the local form of the conservation of mass, the balance of linear momentum and the balance of total energy ($E=\frac{1}{2}\rho|u|^{2}+e$) for a compressible fluid in space, namely, 
\begin{eqnarray}
\rho_{t}+\nabla\cdot(\rho v)&=&0,\label{eq:1}\\
(\rho v)_{t}+\nabla\cdot(\rho v\otimes v)&=&\nabla\cdot\mathbb{T},\label{eq:2}\\
(\rho E)_{t}+\nabla\cdot(\rho E v)&=&\nabla\cdot(\mathbb{T}v)-\nabla\cdot q,\label{eq:3}
\end{eqnarray}
together with the following set of thermodynamical and constitutive assumptions:
\begin{itemize}
	\item [(\textbf{C1})] The Cauchy stress tensor $\mathbb{T}(x,t)$ is of the form 
	\[\mathbb{T}=2\mu\mathbb{D}(v)+\lambda\nabla\cdot v\mathbb{I}-p\mathbb{I},\]
	where $v=v(x,t)$ is the velocity field, $p(x,t)$ is the thermodynamic pressure field, $\mathbb{D}=\frac{1}{2}\left(\nabla u+\nabla u^{\top}\right)$ is the deformation tensor and $\mathbb{I}$ denotes the identity matrix in $\mathbb{R}^{d}$. The quantities $\mu(x,t)$ and $\lambda(x,t)$ are the shear and bulk viscosities respectively.
	\item [(\textbf{C2})] The independent thermodynamical variables are the mass density field $\rho(x,t)>0$ and the absolute temperature field $\theta(x,t)>0$. They vary within the domain
	\[\mathcal{D}:=\left\lbrace(\rho,\theta)\in\mathbb{R}^{2}:\rho>0,\theta>0\right\rbrace.\]
	The thermodynamic pressure, $p$, the internal energy, $e$, and the coefficients $\mu$ and $\lambda$ are given smooth functions of $(\rho,\theta)$ whenever $\rho>0$ and $\theta>0$, 
	\[p,e,\lambda,\mu,\in\mathcal{C}^{\infty}(\mathcal{D}).\]
	\item [(\textbf{C3})] The viscosity coefficients and the heat conductivity satisfy the inequalities 
	\[\mu,~\frac{2}{3}\mu+\lambda\geq 0.\]
	for all $(\rho,\theta)\in\mathcal{D}$.
	\item [(\textbf{C4})] The fluid satisfies the following conditions 
	\[p>0,~p_{\rho}>0,~p_{\theta}>0,~e_{\theta}>0,~\mbox{for all}~(\rho,\theta)\in\mathcal{D}.\]
	\item [(\textbf{C5})] The heat flux vector field is related to the temperature field by Fourier's law
	\[q=-\kappa\nabla\theta\]
	where $\kappa>0$ is the heat conductivity coefficient and is assumed to be a given function of $(\rho,\theta)$ such that 
	\begin{align*}
	\kappa&\in\mathcal{C}^{\infty}(\mathcal{D}),\\
	\kappa\geq 0&\quad\forall(\rho,\theta)\in\mathcal{D}.
	\end{align*}
\end{itemize}
 Assumption \textbf{C4}, refers to compressible fluids satisfying the standard assumptions of Weyl \cite{weyl}. $p_{\rho}>0$ states that, adiabatic increase of pressure implies compression, $p_{\theta}>0$ is a Generalized Gay-Lussac's law and $e_{\theta}>0$ states that an increase of temperature at a constant volume yields an increase of internal energy.\\
The model obtained from equations \eqref{eq:1}-\eqref{eq:3} with the assumptions \textbf{C1}-\textbf{C5} is a closed system of equations for the variables $(\rho,v,\theta)$ known as the Navier-Fourier-Stokes system for compressible fluid flow. Assumption \textbf{C1} together with Fourier's law are enough to assure that this constitutive model complies with the axiom of material frame-indifference (see, \cite{gonzalez}). In particular, by setting $\mu=\lambda=\kappa=0$ in this model, we get Euler's equations for a compressible fluid. As a consequence of assumption \textbf{C4}, the mathematical structure of Euler's equations is that of a \emph{hyperbolic system of conservation laws} (\cite{daf}, \cite{serre}, \cite{smo}).\\
One of the major drawbacks of using Fourier's law to close the system \eqref{eq:1}-\eqref{eq:3} is that, it brings up an intrinsic hypothesis about the heat transfer, namely, the instantaneous response of the heat flux to a temperature gradient (\cite{kakac}). This unrealistic assumption predicts infinite speed of propagation of heat. In fact, Fourier's law together with the first law of thermodynamics ( eq. \eqref{eq:3}) can be used to obtain a parabolic equation for the temperature, \emph{the heat equation}.\\
Fourier's law may fail in heat conduction problems at the nanoscale or in problems involving very short timescales. See chapter 7 in \cite{zhang} for a general description of these situations; \cite{liutan} for a situation in high energy laser technology and \cite{vad} for another in nano-fluid heat transport. Furthermore, we can find materials exhibiting a relatively long thermal relaxation time $\tau$ (\cite{prez}, \cite{kakac}, \cite{zhang}). This means that the response of the heat flux to the temperature gradient throughout such material, is no longer small enough to be neglected and to be deemed as instantaneous. Such is the case for biological tissue (\cite{dai}, \cite{mitra}), for instance.\\
Therefore, a non-Fourier heat conduction model that predicts finite speed of propagation of heat should be addressed in situations in which Fourier's law is no longer viable.\\
The so called \emph{hyperbolic heat conduction theories} propose an evolution equation for the heat flux $q$, instead of Fourier's law, to close the system \eqref{eq:1}-\eqref{eq:3}. In such theories, heat transfer is modeled as a thermal wave. These theories have the special feature that the equation for the heat flux $q$ is regarded as both, a constitutive equation and a dynamical variable. One of the best known substitutes for Fourier's law is the Maxwell-Cattaneo law (\cite{cat})
\[\tau q_{t}+q=-\kappa\nabla\theta.\]
The coupling between Maxwell-Cattaneo's law and different models in continuum mechanics has been fairly studied. For example, Cattaneo's law has been coupled with the nonisentropic Euler-Maxwell system in plasma physics (see \cite{kawa2}). Such coupling yields a hyperbolic system of balance laws with a nonsymmmetric relaxation term, called the Euler-Maxwell-Cattaneo system. In another example, Cattaneo's law is used instead of Fourier's law in thermoelasticity, resulting in a new theory called \emph{extended thermoelasticity} (\cite{chandrasek}). In particular, global existence of smooth solution to the equations of one-dimensional nonlinear thermoelasticity when Fourier's law is replaced by Cattaneo's law has been established (see \cite{tara}, \cite{qin}). In addition, when thermal memory is considered, exponential stability of solutions for these equations has been shown before (\cite{qin}).\\
However, Maxwell-Cattaneo's law has proven to be an unsuccessful model for thermal waves in compressible fluid dynamics. To begin with, it has been shown to be incompatible with the second law of thermodynamics (\cite{rubin}). Furthermore, although Maxwell-Cattaneo's law accounts for finite speed of heat conduction, rises another problem when is used to describe heat conduction in a moving solid. As Christov and Jordan have shown, Maxwell-Cattaneo's law is not Galilean invariant (\cite{chjo}). In consecuence, this law is not compatible with the frame indifference principle in continuum dynamics (\cite{gonzalez}). For this reason, Cattaneo type laws involving \emph{objective derivatives}, i.e. derivatives independent of the frame of reference, also have been used to describe thermal waves in fluids (see \cite{lebon}, \cite{franchi} and \cite{straughan2} for example).\\
Christov and Jordan have argued the importance of having convective terms involved in the derivative of the Maxwell-Cattaneo law \cite{chjo}. They showed that, by replacing the partial time derivative with the standard material derivative, namely
\begin{equation}
\tau\left(q_{t}+u\cdot\nabla q\right)+q=-\kappa\nabla\theta,
\label{eq:4}
\end{equation}
a frame indifferent formulation of the Maxwell-Cattaneo law is obtained, one that predicts finite speed of propagation of heat. Yet, Christov and Jordan regarded such formulation as undesirable. The reason for this (they argue), is that such model was shown to be irreducible in several space dimensions. Meaning that, the heat flux cannot be eliminated in order to obtain a single equation for the temperature field (\cite{chjo}). This led Christov to reformulate the Maxwell-Cattaneo law one more time \cite{christov}. In his new model, the partial time derivative of the Maxwell-Cattaneo law is replaced by the so called, \emph{Lie-Oldroyd upper-convected derivative}, yielding the following evolution equation for the heat flux
\begin{equation}
\tau\left[q_{t}+v\cdot\nabla q-q\cdot\nabla v+(\nabla\cdot v)q\right]+q=-\kappa\nabla\theta.\label{eq:5}
\end{equation}
Observe that, in one space dimension the equations \eqref{eq:4} and \eqref{eq:5} are the same. Christov showed that, in the absence of all thermal sources or sinks and neglecting internal dissipation, this proposed upper-convected form of the Maxwell-Cattaneo law, together with the balance of internal energy, allow one to eliminate $q$, thus yielding a single equation for the temperature field.
Christov points out \cite{christov} that previous attempts of coupling a Cattaneo type model with the equations describing the dynamics of a fluid have led to questionable physical results.\\
The coupling between the basic equations in fluid dynamics, eqs. \eqref{eq:1}-\eqref{eq:3}, with the Cattaneo-Christov model, \eqref{eq:5}, has been studied in many different settings. For example, to study thermal convection in a horizontal layer of incompressible Newtonian fluid \cite{stra2}; in the propagation of acoustic and thermal waves in compressible inviscid Newtonian fluids \cite{jor}, \cite{stra} and in thermohaline convection \cite{stra4}. We refer to such systems of equations as \emph{Cattaneo-Christov systems}.\\
As Christov points out \cite{christov}, this model has to be tested. This is the precise objective of the present contribution. We will show that, the coupling between the equations for a compressible inviscid Newtonian fluid and the Cattaneo-Christov heat flux equation is unfit to model the propagation of thermal and acoustic waves in several space dimensions. It will be shown that the respective quasilinear system of equations is not hyperbolic.
\section{Quasilinear Cattaneo-Christov system for a compressible inviscid fluid}
We denote $U=(\rho,v,\theta,q)^{T}\in\mathcal{O}\subset\mathbb{R}^{N}$ as the vector of state variables, defined on the convex open set 
\[\mathcal{O}:=\left\lbrace(\rho,v,\theta,q)\in\mathbb{R}^{N}:\rho>0,\theta>0\right\rbrace,\]
where $N=2d+2$ and $d$ is the spatial dimension. We refer to $\mathcal{O}$ as the \emph{state space}. We write the equations \eqref{eq:1}-\eqref{eq:3} together with the equation \eqref{eq:5} as a quasilinear system of partial differential equations for the state variable $U$ in three space dimensions, that is, we set $d=3$. Such system has the form 
\begin{equation}
A^{0}(U)U_{t}+A^{i}(U)\partial_{i}U+Q(U)=0,
\label{eq:21}
\end{equation}
where repeated index notation has been used in the space derivatives $\partial_{i}$ and $i=1,2,3$. Here, once $U\in\mathcal{O}$ is given, each coefficient $A^{0}(U)$, $A^{i}(U)$ is a matrix of order $N\times N$ and $Q(U)$ is a vector in $\mathbb{R}^{N}$. Observe that, since $d=3$, $N=8$. To obtain the quasilinear form we provide the symbol
\begin{equation}
A(\xi;U):=\sum_{i=1}^{3}A^{i}(U)\xi_{i},\label{eq:22}
\end{equation}
for each $\xi=(\xi_{1},\xi_{2},\xi_{3})\in\mathbb{S}^{2}$, as well as $A^{0}(U)$ and $Q(U)$. Therefore, $A^{0}(U)$ is a diagonal matrix given as
\[
A^{0}(U)=\left(\begin{array}{cccc}
	1&&&\\
	&\rho\mathbb{I}_{3}&&\\
	&&\rho e_{\theta}&\\
	&&&\tau\mathbb{I}_{3}
\end{array}\right),\]
where $\mathbb{I}_{3}$ denotes the identity matrix of order $3\times 3$ and all the empty spaces refer to zero block matrices of the appropriate sizes.
\small
\begin{equation*}
A(\xi;U)=\left(\begin{array}{cccccccc}
	\xi\cdot v&\xi_{1}\rho&\xi_{2}\rho&\xi_{3}\rho&0&0&0&0\\
	\xi_{1}p_{\rho}&\rho\xi\cdot v&0&0&\xi_{1}p_{\theta}&0&0&0\\
	\xi_{2}p_{\rho}&0&\rho\xi\cdot v&0&\xi_{2}p_{\theta}&0&0&0\\
	\xi_{3}p_{\rho}&0&0&\rho\xi\cdot v&\xi_{3}p_{\theta}&0&0&0\\
	0&\xi_{1}\theta p_{\theta}&\xi_{2}\theta p_{\theta}&\xi_{3}\theta p_{\theta}&\rho e_{\theta}\xi\cdot v&\xi_{1}&\xi_{2}&\xi_{3}\\
	0&&&&\xi_{1}\kappa&\tau\xi\cdot v&0&0\\
	0&&\tau\mathcal{Q}(\xi;q)&&\xi_{2}\kappa&0&\tau\xi\cdot v&0\\
	0&&&&\xi_{3}\kappa&0&0&\tau\xi\cdot v\\
\end{array}\right),
\end{equation*}
\normalsize
where, for each $\xi\in\mathbb{S}^{2}$, the matrix $A(\xi,U)$ has a sub-block matrix, $\mathcal{Q}(q;\xi)$, of order $3\times 3$, that is being multiplied by $\tau\neq 0$, given as
\begin{equation}
\mathcal{Q}(q;\xi)=\left(\begin{array}{ccc}
	-\xi_{2} q_{2}-\xi_{3} q_{3}&\xi_{2}q_{1}&\xi_{3}q_{1}\\
	\xi_{1}q_{2}&-\xi_{1}q_{1}-\xi_{3}q_{3}&\xi_{3}q_{2}\\
	\xi_{1}q_{3}&\xi_{2}q_{3}&-\xi_{1}q_{1}-\xi_{2}q_{2}
\end{array}\right).
\label{eq:23}
\end{equation}
Finally
\[Q(U)=\left(0,0,0,0,0,q_{1},q_{2},q_{3}\right)^{\top}.\]
The resulting quasilinear system defined through these matrices is called the \emph{inviscid Cattaneo-Christov system for compressible fluid flow}. Throughout this work we simply refer to this system as the inviscid Cattaneo-Christov system.\\
Notice that, the sub-block $\tau\mathcal{Q}(q;\xi)$ contains all the terms of the Lie-Oldroyd upper convected derivative that differ from the standard material derivative. This matrix plays a key role in the ill-behavior of the inviscid Cattaneo-Christov system as it will be shown in the next section.
\section{Non-hyperbolicity}
We remind the reader that, the system \eqref{eq:21} is called \emph{hyperbolic} if for any fixed $U\in\mathcal{O}$ and $\xi\in\mathbb{S}^{2}$ the matrix $A^{0}(U)$ is non singular and the eigenvalue problem 
\[\left(A(\xi;U)-\eta A(U)\right)V=0\]
has real eigenvalues ($\eta\in\mathbb{R}$) and $N$($=8$) linearly independent eigenvectors. Hyperbolicity of a first order quasilinear system is a fundamental property which guarantees the well-posedness of the physical problem (see \cite{otto}). Moreover, for the non-characteristic unconstrained Cauchy problem, hyperbolicity is also a necessary condition for well-posedness (cf. \cite{guy}).\\
In this section we will show that, the inviscid Cattaneo-Christov system is not hyperbolic. To this end, the following lemma will be required.
\begin{lema}
There exists vectors $q=(q_{1},q_{2},q_{3})^{\top}\in\mathbb{R}^{3}$ and wave numbers $\xi\in\mathbb{S}^{2}$ such that, once fixed 
\[\operatorname{rank}{\mathcal{Q}(\xi;q)}=2\quad\mbox{and}\quad \dim\ker\mathcal{Q}(\xi;q)=1.\]
\end{lema}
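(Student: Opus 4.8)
The plan is to exhibit an explicit pair $(q,\xi)$ and simply compute the rank of the $3\times 3$ matrix $\mathcal{Q}(\xi;q)$ in \eqref{eq:23}, verifying it equals $2$; the rank-nullity theorem then gives $\dim\ker\mathcal{Q}(\xi;q)=3-2=1$ for free. So the entire content is to find one good choice and check one determinant (to see the rank is $<3$) together with one $2\times2$ minor (to see the rank is $\geq 2$).

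First I would try the simplest non-degenerate direction $\xi=e_1=(1,0,0)^{\top}$. Substituting $\xi_1=1$, $\xi_2=\xi_3=0$ into \eqref{eq:23} collapses the matrix to
\[
\mathcal{Q}(q;e_1)=\begin{pmatrix} 0 & 0 & 0\\ q_2 & -q_1 & 0\\ q_3 & 0 & -q_1\end{pmatrix}.
\]
Its determinant is manifestly $0$ (the first row vanishes), so $\operatorname{rank}\leq 2$ for every $q$. To force the rank to be exactly $2$ it suffices that the lower-left $2\times2$ block be invertible, i.e. that $\det\begin{pmatrix} q_2 & -q_1\\ q_3 & 0\end{pmatrix}=q_1 q_3\neq 0$; taking for instance $q=(1,0,1)^{\top}$ (so $q_1=q_3=1$, $q_2=0$) gives this minor equal to $1\neq 0$. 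Hence with $\xi=e_1$ and $q=(1,0,1)^{\top}$ one has $\operatorname{rank}\mathcal{Q}(\xi;q)=2$, and therefore $\dim\ker\mathcal{Q}(\xi;q)=1$, as claimed. (One can equally note that the kernel is spanned by $(q_1,q_2,q_3)^{\top}$ itself in this case, since each row of $\mathcal{Q}(q;e_1)$ is orthogonal to $q$ precisely when $q_1\neq 0$ — a useful observation for the later analysis of the full symbol, though not needed here.)

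There is essentially no obstacle: the statement is an existence claim verified by a single explicit example, and the only thing to be slightly careful about is choosing $q$ so that a $2\times2$ minor is genuinely nonzero (a generic $q$ with $q_1q_3\neq 0$ works, but one should write down a concrete vector rather than argue ``generically''). If one prefers a fully symmetric-looking witness, $\xi=\tfrac{1}{\sqrt3}(1,1,1)^{\top}$ together with $q=(1,1,1)^{\top}$ also works: then $\mathcal{Q}(q;\xi)=\tfrac{1}{\sqrt3}\bigl(qq^{\top}-2(q\cdot\mathbf{1})I\bigr)$-type structure degenerates with $q$ in its kernel, and a $2\times2$ minor is readily seen to be nonzero; but the coordinate choice $\xi=e_1$ above is the cleanest and I would present that one.
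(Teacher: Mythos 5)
Your proof is correct: with $\xi=e_{1}$ and $q=(1,0,1)^{\top}$ the first row of $\mathcal{Q}$ vanishes, the $2\times2$ minor on rows $2,3$ and columns $1,2$ equals $q_{1}q_{3}=1\neq0$, so the rank is exactly $2$, and rank--nullity gives a one-dimensional kernel. The strategy is the same as the paper's (exhibit an explicit pair and count), but the witness and the rank check differ: the paper takes any $\bar{q}$ with nonzero, pairwise distinct entries and, crucially, $\bar{\xi}=\bar{q}/|\bar{q}|$, proves that every pair of columns of $\mathcal{Q}(\bar{\xi};\bar{q})$ is linearly independent (so rank $\geq2$), and uses the identity $\mathcal{Q}(\xi;\bar{q})\bar{q}=0$, valid for every $\xi$, to get rank $\leq2$. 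The paper's particular choice is not arbitrary: the same pair $(\bar{q},\bar{\xi})$ is reused verbatim in the proof of Theorem \ref{nonhyp}, where one needs both $\ker\mathcal{Q}(\bar{\xi};\bar{q})=\operatorname{span}\{\bar{q}\}$ and $\bar{\xi}\cdot\bar{q}\neq0$ to force the velocity components of a putative eigenvector to vanish. Your pair happens to satisfy both properties ($q$ spans the kernel and $\xi\cdot q=1\neq0$), so it would serve there as well, but if your lemma is meant to feed that argument you should record those two facts explicitly. One small inaccuracy in your parenthetical: each row of $\mathcal{Q}(q;e_{1})$ is orthogonal to $q$ for \emph{every} $q$ (indeed $\mathcal{Q}(\xi;q)q=0$ identically, which is the paper's relation \eqref{eq:31}); the condition $q_{1}\neq0$ is what makes the kernel exactly one-dimensional, i.e.\ equal to $\operatorname{span}\{q\}$, not what puts $q$ in the kernel.
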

\begin{proof}
Let us take $\bar{q}=(\bar{q}_{1},\bar{q}_{2},\bar{q}_{3})\in\mathbb{R}^{3}$ any vector that satisfies $\bar{q}_{i}\neq \bar{q}_{j}$ and $\bar{q}_{i}\neq 0$ for all $i,j=1,2,3$ and $\bar{\xi}\in\mathbb{S}^{2}$ as $\bar{\xi}=\frac{\bar{q}}{|\bar{q}|}$. We will show the required properties for $\bar{q}$ and $\bar{\xi}$.\\
First, observe that for every $\xi\in\mathbb{S}^{2}$, the columns of $\mathcal{Q}(\xi;\bar{q})$ are linearly dependent, since 
\begin{equation}
\bar{q}_{1}\left(\begin{array}{c}
	-\xi_{2}\bar{q}_{2}-\xi_{3}\bar{q}_{3}\\
	\xi_{1}\bar{q}_{2}\\
	\xi_{1}\bar{q}_{3}
\end{array}\right)+\bar{q}_{2}\left(\begin{array}{c}
	\xi_{2}\bar{q}_{1}\\
	-\xi_{1}\bar{q}_{1}-\xi_{3}\bar{q}_{3}\\
	\xi_{2}\bar{q}_{3}
\end{array}\right)+\bar{q}_{3}\left(\begin{array}{c}
	\xi_{3}\bar{q}_{1}\\
	\xi_{3}\bar{q}_{2}\\
	-\xi_{1}\bar{q}_{1}-\xi_{2}\bar{q}_{2}\end{array}\right)=0,
	\label{eq:31}
	\end{equation}
	and $\bar{q}\neq 0$. Then, $\operatorname{rank}{\mathcal{Q}(\xi;\bar{q})}<3$ for all $\xi\in\mathbb{S}^{2}$. Now, consider the columns of $\mathcal{Q}(\bar{\xi},\bar{q})$, 
	\[\frac{1}{|\bar{q}|}\left(\begin{array}{c}
	-\bar{q}_{2}^{2}-\bar{q}_{3}^{2}\\
	\bar{q}_{1}\bar{q}_{2}\\
	\bar{q}_{1}\bar{q}_{3}
\end{array}\right),~ \frac{1}{|\bar{q}|}\left(\begin{array}{c}
	\bar{q}_{2}\bar{q}_{1}\\
	-\bar{q}_{1}^{2}-\bar{q}_{3}^{2}\\
	\bar{q}_{2}\bar{q}_{3}
\end{array}\right),~\mbox{and}~ \frac{1}{|\bar{q}|}\left(\begin{array}{c}
	\bar{q}_{3}\bar{q}_{1}\\
	\bar{q}_{3}\bar{q}_{2}\\
	-\bar{q}_{1}^{2}-\bar{q}_{2}^{2}\end{array}\right).\]
	Each of these columns is different from zero and every pair of them is a linearly independent set. Take for example, the first two columns and assume that there is an $\alpha\in\mathbb{R}$ such that 
\[\alpha\left(\begin{array}{c}
	-\bar{q}_{2}^{2}-\bar{q}_{3}^{2}\\
	\bar{q}_{1}\bar{q}_{2}\\
	\bar{q}_{1}\bar{q}_{3}
\end{array}\right)=\left(\begin{array}{c}
	\bar{q}_{2}\bar{q}_{1}\\
	-\bar{q}_{1}^{2}-\bar{q}_{3}^{2}\\
	\bar{q}_{2}\bar{q}_{3}
\end{array}\right).\]
From the third component of this equality we get that, 
\[\alpha=\frac{\bar{q}_{2}}{\bar{q}_{1}}\]
and by using this value in the first and second equalities we obtain that $|\bar{q}|=0$, a contradiction. So, the first and second columns of $\mathcal{Q}(\bar{\xi};\bar{q})$ are linearly independent. By using the exact same argument we can show that any other pair of columns has the same property. In consequence, 
\[\operatorname{rank}{\mathcal{Q}(\bar{\xi};\bar{q})}=2.\]
This means that the dimension of the range of $\mathcal{Q}(\bar{\xi};\bar{q})$ as a linear transformation  is two. Thus, the dimension theorem implies that 
\[\dim\ker\mathcal{Q}(\bar{\xi};\bar{q})=1.\]
\end{proof}
\begin{rem}
Notice that, \eqref{eq:31} is the same as $\mathcal{Q}(\bar{\xi};\bar{q})\bar{q}=0$. Hence, the previous lemma assures that once $\bar{q}$ is fixed, the singleton $\left\lbrace \bar{q}\right\rbrace$ is a basis for $\ker\mathcal{Q}(\bar{\xi};\bar{q})$ with $\bar{\xi}=\frac{\bar{q}}{|\bar{q}|}$. Thus, the following identity is satisfied
\begin{equation}
\ker\mathcal{Q}(\bar{\xi};\bar{q})=\left\lbrace\beta(\bar{q}_{1},\bar{q}_{2},\bar{q}_{3}):\beta\in\mathbb{R}\right\rbrace.
\label{eq:32}
\end{equation}
\end{rem}

\begin{theo}
\label{nonhyp}
Under the thermodynamical assumptions \textbf{C2}-\textbf{C4}, the inviscid Cattaneo-Christov system is not hyperbolic. 
\end{theo}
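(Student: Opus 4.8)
\emph{Strategy.} Since $A^{0}(U)$ is diagonal with strictly positive diagonal entries it is invertible, so hyperbolicity is equivalent to $(A^{0}(U))^{-1}A(\xi;U)$ being diagonalizable over $\mathbb{R}$ for every $U\in\mathcal{O}$ and $\xi\in\mathbb{S}^{2}$. I would fix one convenient pair $(U,\xi)$ and show that this matrix has a nontrivial Jordan block at the convective eigenvalue $\eta_{0}:=\xi\cdot v$; this alone forbids $N=8$ linearly independent eigenvectors. The pair comes from the Lemma: take any $U\in\mathcal{O}$ with heat flux $q\neq 0$ and set $\xi:=q/|q|$, so that $\operatorname{rank}\mathcal{Q}(\xi;q)=2$ and, by the Remark, $\ker\mathcal{Q}(\xi;q)=\mathbb{R}q=\mathbb{R}\xi$; equivalently, reading \eqref{eq:23} one finds $\mathcal{Q}(\xi;q)=q\xi^{\top}-(q\cdot\xi)\mathbb{I}_{3}=-|q|\,(\mathbb{I}_{3}-\xi\xi^{\top})$, minus $|q|$ times the orthogonal projection onto $\xi^{\perp}$, so $\operatorname{range}\mathcal{Q}(\xi;q)=\xi^{\perp}$.

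\emph{Step 1: the eigenspace at $\eta_{0}$.} Write a vector of $\mathbb{R}^{8}$ as $(r,w,s,g)$ with $r,s\in\mathbb{R}$, $w,g\in\mathbb{R}^{3}$. Evaluating $A(\xi;U)-\eta_{0}A^{0}(U)$ and reading off the rows attached to \eqref{eq:1}, \eqref{eq:2}, \eqref{eq:3}, \eqref{eq:5}, a vector is in the kernel iff $\xi\cdot w=0$, $p_{\rho}r+p_{\theta}s=0$, $\xi\cdot g=0$, and $\tau\,\mathcal{Q}(\xi;q)\,w+\kappa s\,\xi=0$. Pairing the last relation with $\xi$ and using $\xi^{\top}\mathcal{Q}(\xi;q)=\xi^{\top}\!\big(q\xi^{\top}-(q\cdot\xi)\mathbb{I}_{3}\big)=0$ gives $\kappa s=0$, hence $s=0$ (using $\kappa>0$), then $r=0$ (since $p_{\rho}>0$ by \textbf{C4}), and then $\mathcal{Q}(\xi;q)w=0$ forces $w\in\ker\mathcal{Q}(\xi;q)=\mathbb{R}\xi$, which with $\xi\cdot w=0$ gives $w=0$. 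Thus $\ker\!\big(A(\xi;U)-\eta_{0}A^{0}(U)\big)=\{(0,0,0,g):\xi\cdot g=0\}$ has dimension $2$.

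\emph{Step 2: a generalized eigenvector.} Choose $g_{0}\in\xi^{\perp}$, $g_{0}\neq 0$, and put $V_{0}:=(0,0,0,g_{0})$, an eigenvector; then $A^{0}(U)V_{0}=(0,0,0,\tau g_{0})$. I would solve $\big(A(\xi;U)-\eta_{0}A^{0}(U)\big)V_{1}=A^{0}(U)V_{0}$ for $V_{1}=(r_{1},w_{1},s_{1},g_{1})$. The same block reading gives $\xi\cdot w_{1}=0$, $p_{\rho}r_{1}+p_{\theta}s_{1}=0$, $\xi\cdot g_{1}=0$ and $\tau\,\mathcal{Q}(\xi;q)w_{1}+\kappa s_{1}\,\xi=\tau g_{0}$; pairing the last with $\xi$ and using $\xi\cdot g_{0}=0$ forces $\kappa s_{1}=0$, so $s_{1}=0$, $r_{1}=0$, and it remains to solve $\mathcal{Q}(\xi;q)w_{1}=g_{0}$. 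Since $g_{0}\in\xi^{\perp}=\operatorname{range}\mathcal{Q}(\xi;q)$ and $\mathcal{Q}(\xi;q)=-|q|\,(\mathbb{I}_{3}-\xi\xi^{\top})$, the choice $w_{1}:=-g_{0}/|q|$ solves it and automatically satisfies $\xi\cdot w_{1}=0$; taking $g_{1}=0$ yields a valid $V_{1}$. Then $\big((A^{0})^{-1}A-\eta_{0}\big)V_{0}=0$ and $\big((A^{0})^{-1}A-\eta_{0}\big)V_{1}=V_{0}$ with $V_{0}\neq 0$, i.e.\ $\{V_{0},V_{1}\}$ is a Jordan chain of length $2$, so $(A^{0}(U))^{-1}A(\xi;U)$ is not diagonalizable.

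\emph{Conclusion, and where the care is needed.} A non-diagonalizable symbol cannot have $N$ linearly independent eigenvectors, so the inviscid Cattaneo-Christov system is not hyperbolic; only \textbf{C2}--\textbf{C4} (through $\rho,\theta>0$ and $p_{\rho}>0$) and $\kappa>0$ are used, and both steps carry over verbatim when $\kappa=0$ (then $s$, $s_{1}$ simply remain free and the argument only gets easier). The scheme is quite smooth once $\xi=q/|q|$ is chosen; the single point requiring attention is the solvability of the generalized-eigenvector equation in Step 2, which is exactly where the Lemma's conclusion $\operatorname{rank}\mathcal{Q}(\xi;q)=2$ --- equivalently $\operatorname{range}\mathcal{Q}(\xi;q)=\xi^{\perp}$ --- enters. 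As an independent check one can expand $\det\!\big(A(\xi;U)-\eta A^{0}(U)\big)$, most easily after rotating $\xi$ to $(1,0,0)$ and expanding along the two heat-flux columns transverse to $\xi$ and then the two transverse momentum rows; it factors as $\rho^{2}\tau^{2}(\xi\cdot v-\eta)^{4}P(\xi\cdot v-\eta)$ with $P(z)=\rho^{2}e_{\theta}\tau z^{4}-(\rho\kappa+\theta p_{\theta}^{2}\tau+\rho^{2}p_{\rho}e_{\theta}\tau)z^{2}+\rho p_{\rho}\kappa$, and $P(0)=\rho p_{\rho}\kappa\neq 0$, so $\eta_{0}$ has algebraic multiplicity $4$ against geometric multiplicity $2$; moreover $P$ has only real roots, confirming that the obstruction is a genuine Jordan block rather than a complex characteristic speed.
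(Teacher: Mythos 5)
Your argument is correct, and it reaches the conclusion by a mechanism genuinely different from the paper's. You share with the paper the decisive choice $\bar\xi=\bar q/|\bar q|$ and the computation of the eigenspace of $\eta_0=\xi\cdot v$ (your Step 1 reproduces, in block form, exactly the paper's equations \eqref{eq:36}--\eqref{eq:314}, and your observation $\mathcal{Q}(\xi;q)=q\xi^{\top}-(q\cdot\xi)\mathbb{I}_{3}=-|q|(\mathbb{I}_{3}-\xi\xi^{\top})$ is a cleaner proof of Lemma 1 that moreover shows the hypotheses $\bar q_i\neq\bar q_j$, $\bar q_i\neq 0$ are unnecessary: any $q\neq 0$ works). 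Where you diverge is in how non-hyperbolicity is certified. The paper computes the full characteristic polynomial \eqref{eq:35} via the block-determinant formula, establishes that $\eta_0$ has algebraic multiplicity $4$ while the remaining four roots are real and simple, and then compares with the geometric multiplicity $2$. You instead exhibit an explicit Jordan chain of length $2$ for $(A^{0})^{-1}A$ at $\eta_0$, using that $\operatorname{range}\mathcal{Q}(\bar\xi;\bar q)=\bar\xi^{\perp}$ makes the generalized-eigenvector equation $\mathcal{Q}(\bar\xi;\bar q)w_{1}=g_{0}$ solvable; this kills diagonalizability without ever computing the determinant or knowing the algebraic multiplicities. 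Your route is shorter and more robust (it is insensitive to the locations of the other eigenvalues, and, as you note, survives $\kappa=0$); what the paper's route buys is the explicit list of characteristic speeds $\eta_1,\dots,\eta_4$, which it reuses in Sections 4--6 to compare with the Cattaneo--Christov--Jordan system. Two small points of hygiene: the theorem is stated under \textbf{C2}--\textbf{C4} only, but both your argument and the paper's use $\kappa>0$ from \textbf{C5} to force $s=0$ (your parenthetical on the $\kappa=0$ case covers this); and in Step 1 the energy row actually reads $\theta p_{\theta}\,\xi\cdot w+\xi\cdot g=0$, which reduces to $\xi\cdot g=0$ only after invoking $\xi\cdot w=0$ from the mass row --- worth saying explicitly. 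Your closing determinant check agrees with \eqref{eq:35} after multiplying the paper's $P_{0}$ by $\rho$.
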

\begin{proof}
Let $U\in\mathcal{O}$ and $\xi\in\mathbb{S}^{2}$. Consider the eigenvalue problem
\[\left(A(\xi;U)-\eta A^{0}(U)\right)V=0,\]
for $\eta\in\mathbb{R}$ and $V\in\mathbb{R}^{8}$. In order to prove that the inviscid Cattaneo-Christov system is not hyperbolic we will show that, there are values of $U\in\mathcal{O}$ and $\xi\in\mathbb{S}^{2}$ such that this eigenvalue problem does not have a complete set of eigenvectors.\\
First, we compute the eigenvalues for this system, that is, we look the roots of the equation 
\begin{equation}
\operatorname{det}\left(A(\xi;U)-\eta A^{0}(U)\right)=0.
\label{eq:33}
\end{equation}
For this, we use the formula to compute the determinant of a block matrix (see \cite{fuz}), that is, 
\[\operatorname{det}\left(\begin{array}{cc}
	\mathcal{A}&\mathcal{B}\\
	\mathcal{C}&\mathcal{D}
\end{array}\right)=(\operatorname{det} \mathcal{A})\operatorname{det}( \mathcal{D}-\mathcal{C}\mathcal{A}^{-1}\mathcal{B}),\]
whenever $\mathcal{A}$ is invertible. All the block matrices are given as
\begin{equation*}
\mathcal{A}=\left(\begin{array}{cccc}
	v\cdot\xi-\eta&\xi_{1}\rho&\xi_{2}\rho&\xi_{3}\rho\\
	\xi_{1}p_{\rho}&\rho v\cdot\xi-\eta\rho&0&0\\
	\xi_{2}p_{\rho}&0&\rho v\cdot\xi-\eta\rho&0\\
	\xi_{3}p_{\rho}&0&0&\rho v\cdot\xi-\eta\rho
\end{array}\right),
\end{equation*}
\begin{equation*}
\mathcal{B}=\left(\begin{array}{cccc}
	0&0&0&0\\
	\xi_{1}p_{\theta}&0&0&0\\
	\xi_{2}p_{\theta}&0&0&0\\
	\xi_{3}p_{\theta}&0&0&0
	\end{array}\right),
\end{equation*}
\begin{equation*}
\mathcal{C}=\left(\begin{array}{cccc}
	0&\theta p_{\theta}\xi_{1}&\theta p_{\theta}\xi_{2}&\theta p_{\theta}\xi_{3}\\
	0&-\tau(\xi_{2}q_{2}+\xi_{3}q_{3})&\tau\xi_{2}q_{1}&\tau\xi_{3}q_{1}\\
	0&\tau\xi_{1}q_{2}&-\tau(\xi_{1}q_{1}+\xi_{3}q_{3})&\tau\xi_{3}q_{2}\\
	0&\tau\xi_{1}q_{3}&\tau\xi_{2}q_{3}&-\tau(\xi_{1}q_{1}+\xi_{2}q_{2})
	\end{array}\right)
\end{equation*}
and 
\begin{equation*}
\mathcal{D}=\left(\begin{array}{cccc}
	\rho e_{\theta}v\cdot\xi-\eta\rho e_{\theta}&\xi_{1}&\xi_{2}&\xi_{3}\\
	\xi_{1}\kappa&\tau\xi\cdot v-\eta\tau&0&0\\
	\xi_{2}\kappa&0&\tau\xi\cdot v-\eta\tau&0\\
	\xi_{3}\kappa&0&0&\tau\sigma\cdot v-\eta\tau
	\end{array}\right).
\end{equation*}
Then, we have that 
\begin{equation*}
\operatorname{det}{\mathcal{A}}=\rho^{3}(\xi\cdot v-\eta)^{2}\left((\xi\cdot v-\eta)^{2}-|\xi|^{2}p_{\rho}\right)
\end{equation*}
where we set $\alpha=(\xi\cdot v-\eta)^{2}-|\xi|^{2}p_{\rho}$, so we can write 
\begin{equation*}
\operatorname{det}{\mathcal{A}}=\rho^{3}(\xi\cdot v-\eta)^{2}\alpha.
\end{equation*}
Then
\begin{equation*}
\mathcal{D}-\mathcal{C}\mathcal{A}^{-1}\mathcal{B}=\left(\begin{array}{cccc}
	\left(\rho e_{\theta}-\frac{\theta p_{\theta}^{2}}{\rho\alpha}|\xi|^{2}\right)(\xi\cdot v-\eta)&\xi_{1}&\xi_{2}&\xi_{3}\\
	\xi_{1}\kappa-\frac{p_{\theta}(v\cdot\xi-\eta)Q_{1}\tau}{\rho\alpha}&\tau(\xi\cdot v-\eta)&0&0\\
	\xi_{2}\kappa-\frac{p_{\theta}(v\cdot\xi-\eta)Q_{2}\tau}{\rho\alpha}&0&\tau(\xi\cdot v-\eta)&0\\
	\xi_{3}\kappa-\frac{p_{\theta}(v\cdot\xi-\eta)Q_{3}\tau}{\rho\alpha}&0&0&\tau(\xi\cdot v-\eta)
	\end{array}\right),
\end{equation*}
where 
	\begin{eqnarray*}
	Q_{1}&:=&\xi_{2}^{2}q_{1}+\xi_{3}^{2}q_{1}-\xi_{1}\xi_{2}q_{2}-\xi_{1}\xi_{3}q_{3},\\
	Q_{2}&:=&\xi_{1}^{2}q_{2}+\xi_{3}^{2}q_{2}-\xi_{1}\xi_{2}q_{1}-\xi_{2}\xi_{3}q_{3},\\
	Q_{3}&:=&\xi_{1}^{2}q_{3}+\xi_{2}^{2}q_{3}-\xi_{1}\xi_{3}q_{1}-\xi_{2}\xi_{3}q_{2},\\
	\end{eqnarray*}
and since
\begin{equation}
\xi_{1}Q_{1}+\xi_{2}Q_{2}+\xi_{3}Q_{3}=\mathcal{Q}(\xi;U)\xi\cdot\xi=0,
\label{eq:34}
\end{equation}
we have that
\begin{equation*}
\operatorname{det}\left(\mathcal{D}-\mathcal{C}\mathcal{A}^{-1}\mathcal{B}\right)=\left(\rho e_{\theta}-\frac{\theta p_{\theta}^{2}}{\rho\alpha}|\xi|^{2}\right)\tau^{3}(\xi\cdot v-\eta)^{4}-|\xi|^{2}\kappa\tau^{2}(\xi\cdot v-\eta)^{2}.
\end{equation*}
In consequence we have that 
\begin{equation}
\operatorname{det}\left(A(\xi;U)-\eta A^{0}(\xi;U)\right)=\rho^{3}\tau^{2}(\xi\cdot v-\eta)^{4}P_{0}(\eta),
\label{eq:35}
\end{equation}
where
\begin{equation*}
P_{0}(\eta):=\left(\alpha\rho e_{\theta}-\frac{\theta p_{\theta}^{2}}{\rho}|\xi|^{2}\right)\tau(\xi\cdot v-\eta)^{2}-\alpha|\xi|^{2}\kappa.
\end{equation*}
Since $\alpha$ is a second degree polynomial in $\eta$ then $P_{0}(\eta)$ is a fourth degree polynomial in $\eta$. This means that we will obtain four roots from the equation $P_{0}(\eta)=0$. If we introduce the variable $z:=v\cdot\xi-\eta$ we can rewrite $P_{0}(\eta)=0$ as 
\[\tau\rho e_{\theta}z^{4}-\left(\tau\rho e_{\theta}p_{\rho}+\frac{\tau\theta p_{\theta}^{2}}{\rho}+\kappa\right)z^{2}+\kappa p_{\rho}=0,\]
were we have used that $|\xi|=1$. Upon inspection of the discriminant
\begin{align*}
\Delta&=\left(\tau\rho e_{\theta}p_{\rho}+\frac{\tau\theta p_{\theta}^{2}}{\rho}+\kappa\right)^{2}-4\tau\rho e_{\theta}\kappa p_{\rho}\\
&=\left(\tau\rho e_{\theta}p_{\rho}-\kappa\right)^{2}+\frac{\tau^{2}\theta^{2}p_{\theta}^{4}}{\rho^{2}}+\frac{2\left(\tau\rho e_{\theta}p_{\rho}+\kappa\right)\tau\theta p_{\theta}^{2}}{\rho}>0,
\end{align*}
we conclude that the $z^{2}-$roots are real and positive, 
\begin{align*}
z_{+}^{2}&=\frac{1}{2}\left(p_{\rho}+\frac{\theta p_{\theta}^{2}}{\rho^{2}e_{\theta}}+\frac{\kappa}{\rho e_{\theta}\tau}\right)+\frac{1}{2}\sqrt{\left(p_{\rho}+\frac{\theta p_{\theta}^{2}}{\rho^{2}e_{\theta}}+\frac{\kappa}{\rho e_{\theta}\tau}\right)^{2}-\frac{4p_{\rho}\kappa}{\rho e_{\theta}\tau}},\\
z_{-}^{2}&=\frac{1}{2}\left(p_{\rho}+\frac{\theta p_{\theta}^{2}}{\rho^{2}e_{\theta}}+\frac{\kappa}{\rho e_{\theta}\tau}\right)-\frac{1}{2}\sqrt{\left(p_{\rho}+\frac{\theta p_{\theta}^{2}}{\rho^{2}e_{\theta}}+\frac{\kappa}{\rho e_{\theta}\tau}\right)^{2}-\frac{4p_{\rho}\kappa}{\rho e_{\theta}\tau}},
\end{align*}
with $z_{+}^{2}>z_{-}^{2}$. Hence the roots of $P_{0}(\eta)$ are
\small
\[\eta_{1}=v\cdot\xi+\sqrt{z_{+}^{2}},~
\eta_{2}=v\cdot\xi+\sqrt{z_{-}^{2}},~
\eta_{3}=v\cdot\xi-\sqrt{z_{+}^{2}}~
\mbox{and}~\eta_{4}=v\cdot\xi-\sqrt{z_{-}^{2}},\]
\normalsize
and since $\eta_{3}<\eta_{4}<\eta_{2}<\eta_{1}$, we conclude that the algebraic multiplicity of each of these roots is equal to one. They are given as
\begin{equation*}
\begin{aligned}
\eta_{1}(\xi; U)=\xi\cdot v+\frac{1}{\sqrt{2}}\sqrt{\left(p_{\rho}+\frac{\theta p_{\theta}^{2}}{\rho^{2}e_{\theta}}+\frac{\kappa}{\rho e_{\theta}\tau}\right)+\sqrt{\left(p_{\rho}+\frac{\theta p_{\theta}^{2}}{\rho^{2}e_{\theta}}+\frac{\kappa}{\rho e_{\theta}\tau}\right)^{2}-\frac{4p_{\rho}\kappa}{\rho e_{\theta}\tau}}},\\
\eta_{2}(\xi;U)=\xi\cdot v+\frac{1}{\sqrt{2}}\sqrt{\left(p_{\rho}+\frac{\theta p_{\theta}^{2}}{\rho^{2}e_{\theta}}+\frac{\kappa}{\rho e_{\theta}\tau}\right)-\sqrt{\left(p_{\rho}+\frac{\theta p_{\theta}^{2}}{\rho^{2}e_{\theta}}+\frac{\kappa}{\rho e_{\theta}\tau}\right)^{2}-\frac{4p_{\rho}\kappa}{\rho e_{\theta}\tau}}},\\
\eta_{3}(\xi;U)=\xi\cdot v-\frac{1}{\sqrt{2}}\sqrt{\left(p_{\rho}+\frac{\theta p_{\theta}^{2}}{\rho^{2}e_{\theta}}+\frac{\kappa}{\rho e_{\theta}\tau}\right)+\sqrt{\left(p_{\rho}+\frac{\theta p_{\theta}^{2}}{\rho^{2}e_{\theta}}+\frac{\kappa}{\rho e_{\theta}\tau}\right)^{2}-\frac{4p_{\rho}\kappa}{\rho e_{\theta}\tau}}},\\
\eta_{4}(\xi;U)=\xi\cdot v-\frac{1}{\sqrt{2}}\sqrt{\left(p_{\rho}+\frac{\theta p_{\theta}^{2}}{\rho^{2}e_{\theta}}+\frac{\kappa}{\rho e_{\theta}\tau}\right)-\sqrt{\left(p_{\rho}+\frac{\theta p_{\theta}^{2}}{\rho^{2}e_{\theta}}+\frac{\kappa}{\rho e_{\theta}\tau}\right)^{2}-\frac{4p_{\rho}\kappa}{\rho e_{\theta}\tau}}}.
\end{aligned}
\end{equation*}
From the formula \eqref{eq:35}, we see that the rest of the roots of the characteristic polynomial are given by the equation
\[\rho^{3}\tau^{2}(\xi\cdot v-\eta)^{4}=0,\]
implying that
\begin{equation*}
\eta_{0}(\xi;U)=v\cdot\xi,
\end{equation*}
is a root of algebraic multiplicity equal to four. Observe that $\eta_{1}(\xi;U)$, $\eta_{2}(\xi;U)$, $\eta_{3}(\xi;U)$ and $\eta_{4}(\xi;U)$ are the three dimensional analogue of the roots given in Lemma 3.2 in \cite{amp} for the one dimensional case. Now, since the algebraic multiplicity of these roots is equal to one then each one has geometric multiplicity equal to one. Meaning that, the set $\left\lbrace\eta_{i}(\xi;U)\right\rbrace_{i=1}^{4}$ provides exactly four linearly independent eigenvectors. Given that, $\eta_{0}(\xi;U)$ is different from the other roots, if the eigenvalue $\eta_{0}(\xi;U)$ had four linearly independent eigenvectors for all $U\in\mathcal{O}$ and $\xi\in\mathbb{S}^{2}$, then the inviscid Cattaneo-Christov system would be of hyperbolic nature. We will show that this is not possible for certain $U\in\mathcal{O}$ and $\xi\in\mathbb{S}^{2}$. In particular, we chose $\bar{q}$ and $\bar{\xi}$ as in Lemma 1. For such values of $\overline{U}=(\rho,v,\theta,\bar{q})\in\mathcal{O}$ and $\bar{\xi}\in\mathbb{S}^{2}$ consider the equation
\[A(\bar{\xi},\overline{U})V-\bar{\xi}\cdot v A^{0}(\overline{U})V=0,\]
where $V=(V_{1},V_{2},....,V_{8})\in\mathbb{R}^{8}$. Then, 
\begin{eqnarray}
\bar{\xi}_{1}\rho V_{2}+\bar{\xi}_{2}\rho V_{3}+\bar{\xi}_{3}\rho V_{4}&=&0,\label{eq:36}\\
\bar{\xi}_{1}p_{\rho}V_{1}+\bar{\xi}_{1}p_{\theta}V_{5}&=&0,\label{eq:37}\\
\bar{\xi}_{2}p_{\rho}V_{1}+\bar{\xi}_{2}p_{\theta}V_{5}&=&0,\label{eq:38}\\
\bar{\xi}_{3}p_{\rho}V_{1}+\bar{\xi}_{3}p_{\theta}V_{5}&=&0,\label{eq:39}\\
\theta p_{\theta}\bar{\xi}_{1}V_{2}+\theta p_{\theta}\bar{\xi}_{2}V_{3}+\theta p_{\theta}\bar{\xi}_{3}V_{4}+\bar{\xi}_{1}V_{6}+\bar{\xi}_{2}V_{7}+\bar{\xi}_{3}V_{8}&=&0,\label{eq:310}\\
-\tau(\bar{\xi}_{2}\bar{q}_{2}+\bar{\xi}_{3}\bar{q}_{3})V_{2}+\tau\bar{\xi}_{2}\bar{q}_{1}V_{3}+\tau\bar{\xi}_{3}\bar{q}_{1}V_{4}+\bar{\xi}_{1}\kappa V_{5}&=&0,\label{eq:311}\\
\tau\bar{\xi}_{1}\bar{q}_{2}V_{2}-\tau(\bar{\xi}_{1}\bar{q}_{1}+\bar{\xi}_{3}\bar{q}_{3})V_{3}+\tau\bar{\xi}_{3}\bar{q}_{2}V_{4}+\bar{\xi}_{2}\kappa V_{5}&=&0,\label{eq:312}\\
\tau\bar{\xi}_{1}\bar{q}_{3}V_{2}+\tau\bar{\xi}_{2}\bar{q}_{3}V_{3}-\tau(\bar{\xi}_{1}\bar{q}_{1}+\bar{\xi}_{2}\bar{q}_{2})V_{4}+\bar{\xi}_{3}\kappa V_{5}&=&0.\label{eq:313}
\end{eqnarray}
Multiply \eqref{eq:311} by $\bar{\xi}_{1}$, \eqref{eq:312} by $\bar{\xi}_{2}$, \eqref{eq:313} by $\bar{\xi}_{3}$ and add them up to get 
\[\kappa |\bar{\xi}|^{2}V_{5}=0,\]
implying that $V_{5}=0$. Since $p_{\rho}>0$, \eqref{eq:37} implies $V_{1}=0$. Now, notice that $V_{5}=0$ turns equations \eqref{eq:311}, \eqref{eq:312} and \eqref{eq:313} into 
\begin{eqnarray*}
-(\bar{\xi}_{2}\bar{q}_{2}+\bar{\xi}_{3}\bar{q}_{3})V_{2}+\bar{\xi}_{2}\bar{q}_{1}V_{3}+\bar{\xi}_{3}\bar{q}_{1}V_{4}&=&0,\\
\bar{\xi}_{1}\bar{q}_{2}V_{2}-(\bar{\xi}_{1}\bar{q}_{1}+\bar{\xi}_{3}\bar{q}_{3})V_{3}+\bar{\xi}_{3}\bar{q}_{2}V_{4}&=&0,\\
\bar{\xi}_{1}\bar{q}_{3}V_{2}+\bar{\xi}_{2}\bar{q}_{3}V_{3}-(\bar{\xi}_{1}\bar{q}_{1}+\bar{\xi}_{2}\bar{q}_{2})V_{4}&=&0.
\end{eqnarray*}
Observe however, that this system can be rewritten as 
\[\mathcal{Q}(\bar{\xi};\bar{q})V^{\prime}=0,\]
where $V^{\prime}=(V_{2},V_{3},V_{4})$. Hence, Lemma 1 and Remark 1 yield $V^{\prime}=\beta\bar{q}$ for some $\beta\in\mathbb{R}$. Then, according to \eqref{eq:36} and the thermodynamical assumption $\rho>0$, we get that
\[\beta\bar{\xi}\cdot\bar{q}=0.\]
But since $\bar{\xi}\cdot\bar{q}=|\bar{q}|\neq 0$, we have that $\beta=0$. Thus $V^{\prime}=0$. Then, from \eqref{eq:310} we are left with the equation
\begin{equation}
\bar{\xi}_{1}V_{6}+\bar{\xi}_{2}V_{7}+\bar{\xi}_{3}V_{8}=0.
\label{eq:314}
\end{equation}
If we set $\mathcal{M}:=\operatorname{span}{\left\lbrace(\bar{\xi_{1}},\bar{\xi_{2}},\bar{\xi_{2}})\right\rbrace}$, then $\mathbb{R}^{3}:=\mathcal{M}\oplus\mathcal{M}^{\perp}$. Hence, $\dim\mathcal{M}^{\perp}=2$. According to \eqref{eq:314}, $(V_{6},V_{7},V_{8})\in\mathcal{M}^{\perp}$. Since the rest of the components of any eigenvector $V$ of $\eta_{0}(\bar{\xi};\bar{U})$ are zero, the geometric multiplicity of $\eta_{0}(\bar{\xi};\bar{U})$ is at most two, and thus, strictly less than the algebraic multiplicity. Therefore, the three parameters left $V_{6}(\bar{\xi})$, $V_{7}(\bar{\xi})$ and $V_{8}(\bar{\xi})$, are not enough to provide four linearly independent eigenvectors associated with the eigenvalue $\eta_{0}(\bar{\xi};\bar{U})$. Thus concluding the proof.
\end{proof}
\begin{rem}
For the two dimensional case (i.e. $d=2$ and $N=6$) we can proceed in the same manner. In this case the contribution of the Lie-Oldroyd derivative yields the matrix 
\begin{equation*}
\mathcal{Q}_{2}(\xi;q)=\left(\begin{array}{cc}
	-\xi_{2}q_{2}&\xi_{2}q_{1}\\
	\xi_{1}q_{2}&-\xi_{1}q_{1}
\end{array}\right),
\end{equation*}
which again, by taking $\bar{q}_{1}\neq \bar{q}_{2}\neq 0$ and $\bar{\xi}=\frac{\bar{q}}{|\bar{q}|}$ we get that 
\[\dim\ker \mathcal{Q}_{2}(\bar{\xi};\bar{q})=1.\]
The characteristic polynomial will be given as 
\begin{equation*}
\operatorname{det}\left(A(\xi;U)-\eta A^{0}(\xi;U)\right)=\rho^{2}\tau(\xi\cdot v-\eta)^{2}P_{0}(\eta),
\end{equation*}
where $P_{0}(\eta)$ is the same as in the three dimensional case. Hence $\eta_{1}(\xi;U)$, $\eta_{2}(\xi;U)$, $\eta_{3}(\xi;U)$ and $\eta_{4}(\xi;U)$ will appear one more time as eigenvalues of algebraic multiplicity equal to one. The only difference is that now $\eta_{0}(\xi;U)=v\cdot\xi$ will be an eigenvalue of algebraic multiplicity equal to two. Still, by the same procedure described in theorem $(3.1)$, we can show that the geometric multiplicity of this eigenvalue is equal to one. Thus, we conclude that the two dimensional version of the Cattaneo-Christov system is not hyperbolic.
\end{rem}
\section{Non-symmetrizability}

The hyperbolicity of a quasilinear system of partial differential equations is related with other relevant mathematical properties for system \eqref{eq:21}. One of them is its \emph{symmetrizability}. We remind the reader that a system of the form \eqref{eq:21} is said to be symmetrizable if there exists, $S(U)$, a matrix value function of $U\in\mathcal{O}$, of order $N\times N$, that is smooth and positive definite, such that $S(U)A^{0}(U)$, $S(U)A^{i}(U)$ and $S(U)D(U)$ are symmetric matrices for all $i,j=1,2,3$ and $S(U)A^{0}(U)$ is positive definite (see \cite{benzoni}).\\
For the system \eqref{eq:21}, there is no standard way to compute a symmetrizer without assuming that the system is derived from a set of viscous balance laws. This is not the case for the inviscid Cattaneo-Christov system given that Christov's evolution equation, \eqref{eq:5}, is not given in conservative form. In particular, we cannot rely on the existence of a convex entropy function to assure the existence of the symmetrizer \cite{ka}. However, it is well-known that symmmetrizable systems are hyperbolic (\cite{benzoni}, \cite{serre}), thus, from theorem \ref{nonhyp} it follows that
\begin{coro}
Under the thermodynamical assumptions \textbf{C2}-\textbf{C4}, there is no symmetrizer for the inviscid Cattaneo-Christov system in three and two space dimensions.\qed
\end{coro}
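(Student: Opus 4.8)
The plan is to prove the corollary by contraposition of the classical implication \emph{symmetrizable $\Rightarrow$ hyperbolic}, which is recorded in \cite{benzoni} and \cite{serre} and is already quoted in the text. Concretely, I would assume for contradiction that a symmetrizer exists for the inviscid Cattaneo-Christov system (say in three space dimensions), i.e. a smooth, positive definite matrix-valued function $S(U)$ on $\mathcal{O}$ such that $S(U)A^{0}(U)$ is symmetric and positive definite and $S(U)A^{i}(U)$ is symmetric for $i=1,2,3$; the symmetry requirement on $S(U)D(U)$ plays no role in this direction and may be ignored.

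Under that assumption, fix any $U\in\mathcal{O}$ and $\xi\in\mathbb{S}^{2}$. Then $S(U)A(\xi;U)=\sum_{i=1}^{3}S(U)A^{i}(U)\xi_{i}$ is a real linear combination of symmetric matrices, hence symmetric, while $M:=S(U)A^{0}(U)$ is symmetric positive definite. The eigenvalue problem $\bigl(A(\xi;U)-\eta A^{0}(U)\bigr)V=0$ is equivalent to $\bigl(S(U)A(\xi;U)-\eta M\bigr)V=0$, a symmetric pencil with positive-definite ``mass matrix'' $M$. Setting $W=M^{1/2}V$ converts it into the ordinary symmetric eigenvalue problem $\bigl(M^{-1/2}S(U)A(\xi;U)M^{-1/2}-\eta\,\mathbb{I}\bigr)W=0$, whose matrix is symmetric; by the spectral theorem it has real eigenvalues and a complete orthogonal basis of $N$ eigenvectors, and pulling back through $M^{-1/2}$ yields $N$ linearly independent eigenvectors for the original pencil. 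Since $U$ and $\xi$ were arbitrary, and $A^{0}(U)$ is visibly nonsingular, the system would be hyperbolic in the sense of Section 3. This contradicts Theorem \ref{nonhyp}, which exhibits $\overline{U}\in\mathcal{O}$ and $\bar{\xi}\in\mathbb{S}^{2}$ for which $\eta_{0}(\bar{\xi};\overline{U})=\bar{\xi}\cdot v$ has algebraic multiplicity four but geometric multiplicity at most two, so that no complete set of eigenvectors exists. Hence no symmetrizer can exist in three space dimensions. For $d=2$ the argument is verbatim the same, now invoking the Remark following Theorem \ref{nonhyp}, where $\eta_{0}$ has algebraic multiplicity two and geometric multiplicity one, so the two-dimensional system is non-hyperbolic as well and therefore admits no symmetrizer.

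I do not expect a genuine obstacle here: the content is entirely the contrapositive of a standard fact, and the only thing to be careful about is matching the definition of symmetrizer used in Section 4 (symmetry of $S(U)A^{0}(U)$ and $S(U)A^{i}(U)$, with $S(U)$ and $S(U)A^{0}(U)$ positive definite) to the eigenvalue/eigenvector characterization of hyperbolicity used in Section 3. If one wishes to be fully self-contained one can include the short reduction to the symmetric eigenvalue problem sketched above in place of citing \cite{benzoni,serre}; otherwise the proof is the single sentence ``symmetrizable implies hyperbolic, and the system is not hyperbolic by Theorem \ref{nonhyp} and the subsequent Remark, hence no symmetrizer exists,'' which is why the statement is presented as a corollary.
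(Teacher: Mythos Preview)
Your proposal is correct and matches the paper's own argument exactly: the paper simply invokes the implication ``symmetrizable $\Rightarrow$ hyperbolic'' from \cite{benzoni,serre} and Theorem~\ref{nonhyp} (together with the two-dimensional Remark) to conclude, which is precisely your one-sentence summary. The extra self-contained reduction to a symmetric pencil you sketch is fine but unnecessary for the paper's purposes.
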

There are three main reasons why it is convenient to find a symmetrizer for systems of the form \eqref{eq:21}:
\begin{itemize}
	\item [(a)] It implies the hyperbolicity.
	\item [(b)] Is essential when looking for energy estimates for the Cauchy problem of  \eqref{eq:21} (see \cite{benzoni}, \cite{otto} and \cite{serre}). In order to establish the local well-possednes of the Cauchy problem, the symmetrizer is needed to construct the \emph{energy} of the equation ( see \cite{katosym}, \cite{kawa}, \cite{lax}, \cite{maj}, \cite{matsu}, \cite{novo}, \cite{rack}, amongst others).
	\item [(c)] Establishing the strict dissipativity of the linearized system (Kawashima's genuinely coupling condition, see \cite{ka1} and \cite{hump}). In turn, under the assumption of symmetrizability, the strict dissipativity is equivalent to the existence of global decay rates, according to  Kawashima's equivalence theorem \cite{ka1}.
\end{itemize}
For linear equations with constant coefficients, it is well-known that the hyperbolicity is a necessary and sufficient condition to assure the well-posedness of the Cauchy problem (\cite{otto}, \cite{serre}). In general this means that the linearization of a system of the form \eqref{eq:21} around constant states $U_{c}\in\mathcal{O}$ requires the hyperbolicity of the quasilinear system in order to be well-posed. However, for the inviscid Cattaneo-Christov system, if the constant state $U_{c}$ is an \emph{equilibrium state} the linearization around it will be well-posed. We say that $U_{c}=(\rho_{c},v_{c},\theta_{c},q_{c})^{\top}\in\mathcal{O}$ is a constant equilibrium state if $U_{c}$ is constant and $q_{c}=0$. Let us consider a constant state $U_{c}\in\mathcal{O}$ and assume that $U_{c}+W$ is a solution of \eqref{eq:21}. Then we can recast \eqref{eq:21} as 
\begin{equation}
A^{0}(U_{c})W_{t}+A^{i}(U_{c})\partial_{i}W+Q(U_{c})+DQ(U_{c})W+\mathcal{N}(W,D_{x}W,D_{x}^{2}W,W_{t})=0,
\label{eq:41}
\end{equation} 
where $\mathcal{N}(W,D_{x}W,D_{x}^{2}W,W_{t})$ comprises all the nonlinear terms and $DQ(U)$ denotes the Jacobian matrix of $Q=Q(U)$, which is a diagonal matrix given as 
\[DQ(U)=\left(\begin{array}{cc}
	\mathbb{O}_{5}&\\
	&\mathbb{I}_{3}
\end{array}\right),\]
for all $U\in\mathcal{O}$. Here $\mathbb{O}_{5}$ denotes the zero matrix of order $5\times 5$. If we only consider the linear part of $\eqref{eq:41}$ and $U_{c}$ is taken as constant equilibrium state, then $Q(U_{c})=0$, and so we get the linear constant coefficient system
\begin{equation}
A^{0}(U_{c})W_{t}+A^{i}(U_{c})\partial_{i}W+DQ(U_{c})W=0.
\label{eq:42}
\end{equation}
Notice that for this system $\mathcal{Q}(\xi;q_{c})=0$, that is, for a constant equilibrium state, the symbol $A(\xi;U_{c})$ has no terms involving $q_{c}$. Thus, for this case, Lemma 1 is unsatisfied. In fact, system \eqref{eq:42} is symmetrizable and thus hyperbolic. In this case, a constant symmetrizer can be obtained by making $U=U_{c}$ in the matrix function
\begin{equation}
S(U)=\left(\begin{array}{cccc}
\frac{p_{\rho}}{\rho}&&&\\
&\mathbb{I}_{3}&&\\
&&\frac{1}{\theta}&\\
&&&\frac{1}{\kappa\theta}\mathbb{I}_{3}\\
\end{array}\right),
\label{eq:43}
\end{equation}
defined for any $U\in\mathcal{O}$. Hence, the Cauchy problem for \eqref{eq:42} is well-posed (see \cite{otto}). On the other hand, by taking constant states with the property $q_{1c}\neq q_{2c}\neq q_{3c}\neq 0$ (as in Lemma 1) the linear part of \eqref{eq:41} will not be hyperbolic and its Cauchy problem will be ill-possed.
\section{The Cattaneo-Christov-Jordan system}
Consider the coupling between equations \eqref{eq:1}-\eqref{eq:3} and the evolution equation \eqref{eq:4} in three space dimensions ($d=3$ and $N=8$). This model was first proposed by Christov and Jordan (see, \cite{chjo}). The equations \eqref{eq:1}-\eqref{eq:4} can be written in the quasilinear form \eqref{eq:21} with the same $A^{0}(U)$ and $Q(U)$ for $U\in\mathcal{O}$ as in the inviscid Cattaneo-Christov system. In fact, the only difference will be that the sub-matrix $\mathcal{Q}(\xi;U)$ for this new system will be zero, $\mathcal{Q}(\xi;U)=0$. That is, the new symbol $A(\xi;U)$ will not have terms involving $q$. We will refer to this system as the inviscid Cattaneo-Christov-Jordan system for compressible fluid flow.\\
For the  inviscid Cattaneo-Christov-Jordan system, it can easily be verified that $S=S(U)$ is a symmetrizer. Thus, this system is hyperbolic. Moreover, its characteristic polynomial, will be exactly the same as the characteristic polynomial for the inviscid Cattaneo-Christov system. To verify this fact we can proceed exactly as in the computations made during the proof of \ref{nonhyp}. The slight difference being that now the terms $Q_{1}$, $Q_{2}$ and $Q_{3}$ will not appear in the matrix $\mathcal{D}-\mathcal{C}\mathcal{A}^{-1}\mathcal{B}$, that is, for practical purposes we can consider them to be zero for the Cattaneo-Christov-Jordan system, and so, relation \eqref{eq:34} will be trivially satisfied. Thus, the term
\[\xi_{1}Q_{1}+\xi_{2}Q_{2}+\xi_{3}Q_{3}\]
will have no effect on the computation of the determinant of $\mathcal{D}-\mathcal{C}\mathcal{A}^{-1}\mathcal{B}$ , just as in the inviscid Cattaneo-Christov system. Hence, the characteristic polynomial of the Cattaneo-Christov-Jordan system will be given by \eqref{eq:35}. In particular, this means that the the Cattaneo-Christov-Jordan system and the Cattaneo-Christov system share the same eigenvalues $\eta_{0}(\xi;U)$, $\eta_{1}(\xi;U)$, $\eta_{2}(\xi;U)$, $\eta_{3}(\xi;U)$ and $\eta_{4}(\xi;U)$. But in the case of the Cattaneo-Christov-Jordan system, these eigenvalues represent the characteristic speeds of eight linearly independent wave fronts travelling in the direction $\xi\in\mathbb{S}^{2}$. Also, notice that the linearization of the Cattaneo-Christov-Jordan system around a constant equilibrium state, $U_{c}\in\mathcal{O}$, coincides with the linearization of the inviscid Cattano-Christov system given in \eqref{eq:42}.\\
Finally, since the Cattaneo-Christov-Jordan system is symmetrizable, its Cauchy problem is locally well-posed (see, \cite{katosym}, \cite{otto} and \cite{lax}).

\section{Discussion and conclusions}
The aim of the coupling between the Cattaneo-Christov evolution equation \eqref{eq:5} and the equations \eqref{eq:1}-\eqref{eq:3} was to obtain a hyperbolic heat conducting theory describing the dynamics of fluid flow that is compatible with the frame indifference principle in continuum dynamics. On one hand, it is true that \eqref{eq:5} is a frame invariant formulation of the Maxwell-Cattaneo law (\cite{christov}) that can be used together with the equation
\[\rho c_{p}\left(\theta_{t}+v\cdot\nabla\theta\right)=-\nabla\cdot q\]
(a simplified version of the equation for the energy \eqref{eq:3} where $c_{p}$ is the specific heat), to derive a single hyperbolic equation for the temperature field. But, as we have shown in this work, the inviscid Cattaneo-Christov system is not hyperbolic. On the other hand, Euler's equation for a compressible fluid are of hyperbolic nature, once assumed \textbf{C2}, \textbf{C4} and \textbf{C5}, but once we use Fourier's heat flux law to derive a single equation for the temperature we obtain a parabolic equation. So, it seems that, it is the given constitutive law for the heat flux that ultimately decides the kind of heat equation that will remain. For this reason, there are two different concepts at play here, the hyperbolicity (or lack there of) of the coupled system, and the hyperbolicity (or lack there of) of the second order in time equation for the temperature. As we showed, 
the introduction of the Cattaneo-Christov model doesn't focuses in the hyperbolicity of the system \eqref{eq:21} but rather, as Christov showed, focuses only in the hyperbolicity of the second order equation for the temperature.\\
Now, given that the hyperbolicity is a necessary condition for the well-posedness of the noncharacterisitc Cauchy problem for nonlinear partial differential equations (\cite{guy}), we cannot assure that the inviscid Cattaneo-Christov system properly describes the dynamics of heat conducting compressible fluid flow. The lack of a symmmetrizer for this system prevents us to apply the standard well-posedness $L^{2}$-theory for both linearized and quasilinear cases. Moreover, one can verify Kawashima's genuinely coupling condition for the linear part of \eqref{eq:41}, but it has no mathematical relevance since Kawashima's equivalence theorem (\cite{ka1}) cannot be verified.\\
Another drawback of the non-existence of the symmetrizer for the inviscid Cattaneo-Christov system is not being able to use the standard theory of conservation laws to study it. Even if there is a diffeomorphism $U\mapsto V$, $V=V(U)\in\mathbb{R}^{N}$, from $\mathcal{O}$ into an open set $\mathcal{O}_{V}$, $N$-dimensional vector functions $F^{j}$, $j=0,1,..,d$, $N\times N$ matrix functions $G^{ij}$ $i,j=1,..,d$ depending smoothly on $V$, such that \eqref{eq:21} becomes 
\[V_{t}+\sum_{j=1}^{d}\partial_{j}F^{j}(V)+F^{0}(V)=\sum_{i,j=1}^{d}\partial_{i}\left(G^{ij}(V)\partial_{j}V\right),\]
there will not exist a convex entropy function (see, \cite{daf}, \cite{kurt} and \cite{ka}). The existence of a convex entropy function plays a fundamental role in establishing admissibility conditions for selecting, from all weak solutions of a conservation law, that which is stable from the physical or mathematical point of view (see \cite{serre}). For this reason, even in the case in which the Cattaneo-Christov system can be rewritten as a conservation law, the non-existence of a convex entropy function is a major disadvantage for the theory of weak solutions for this system of equations. Is important to mention that the author of this work has not being able to find the change of variables $V=V(U)$, not even in one space dimension.\\
We showed that for constant equilibrium states the linear part of \eqref{eq:41} is symmetrizable and thus hyperbolic, we recognized that this linearization coincides with the linearization of a different quasilinear system of equations, the so called inviscid Cattaneo-Christov-Jordan system. This system is symmetrizable, hence hyperbolic and, in consequence, its Cauchy problem is well-posed. Furthermore, notice that in one space dimension, the inviscid Cattaneo-Christov system and the inviscid Cattaneo-Christov-Jordan system are the same. For this reason, in several space dimensions, the inviscid Cattaneo-Christov-Jordan system is the exact analogue of the one dimensional inviscid Cattaneo-Christov system. One outstanding fact is that both systems share the same eigenvalues as consequence of sharing the same characteristic polynomial. Now, the characteristic polynomial \eqref{eq:35} does not depends on $q$, this is a striking fact, given that the symbol $A(\xi;U)$ of the Cattaneo-Christov system has a block matrix of order $3\times 3$ filled with the components of $q$, i.e. $\tau\mathcal{Q}(\xi;U)$. Since this block matrix was induced by the Lie-Oldroyd derivative in Christov's evolution equation \eqref{eq:5}, then this frame invariant formulation has no effects in the characteristic speeds of heat propagation in a compressible inviscid fluid flow. Thus supporting the case for the Cattaneo-Christov-Jordan system.\\
This feature of Christov's equation is a consequence of relation \eqref{eq:34}. Thus, if any type of objective derivative is used to derive a Cattaneo type model for the heat flux and then is to be coupled with equations \eqref{eq:1}-\eqref{eq:3}, as long as \eqref{eq:34} is satisfied, the characteristic speeds of this new system will be the same as the the ones of the Cattaneo-Christov-Jordan system.\\
We conclude that, compared to the standard material derivative, the introduction of the Lie-Oldroyd material derivative to obtain a frame invariant formulation of the Cattaneo law for the heat flux seems to serve the only computational purpose of solving for a single hyperbolic differential equation for the temperature. Even if this feature is considered an advantage, it brings a major mathematical and physical drawback, the non-hyperbolicity of the inviscid Cattaneo-Christov system.\\
Under the light of these observations, we conclude that our result seems to jeopardize the applicability of the inviscid Cattaneo-Christov system to properly describe the motion of compressible, thermally relaxed fluids in several space dimensions.
\section*{Acnowledgments}
Thanks to professor Ram\'on G. Plaza for his comments and guidance. This work was supported by CONACyT (M\'exico), through a scholarship for doctoral studies, grant no. 465484 and by DGAPA-UNAM program PAPIIT, grant IN100318.

\bibliographystyle{plain} 
\bibliography{feVSkawabiblio}

\begin{thebibliography}{10}

\bibitem{amp}
F.~Angeles, C.~M\'alaga, and R.~Plaza.
\newblock Strict dissipativity of {C}attaneo-{C}hristov systems for
  compressible fluid flow.
\newblock {\em J. Phys. A: Math. Theor.}, 53, 2020.

\bibitem{benzoni}
S.~Benzoni-Gavage and D.~Serre.
\newblock {\em Multidimensional {H}yperbolic {P}artial {D}ifferential
  {E}quations}.
\newblock Oxford {U}niversity {P}ress, 2007.

\bibitem{cat}
C.~Cattaneo.
\newblock Sulla conduzione de calore.
\newblock {\em Atti Semin. Mat. Fis. della Universit\ `{a} di Modena},
  3:83--101, 1948.

\bibitem{chandrasek}
D.~S. Chandrasekharaiah.
\newblock Thermoelasticity with second sound: {A} review.
\newblock {\em Appl. Mech. Rev.}, 39(3):355--376, 1986.

\bibitem{christov}
C.~I. Christov.
\newblock On frame indifferent formulation of the{M}axwell-{C}attaneo model of
  finite-speed heat conduction.
\newblock {\em Mech. Research Comm}, 36:481--486, 2009.

\bibitem{chjo}
C.~I. Christov and P.~M. Jordan.
\newblock Heat conduction paradox involving second-sound propagation in moving
  media.
\newblock {\em Phys. Rev. Lett}, 94:154301, 2005.

\bibitem{daf}
C.~Dafermos.
\newblock {\em Hyperbolic Conservation Laws in Continuum Physics}.
\newblock Springer, 2016.

\bibitem{dai}
W.~Dai, H.~Wang, P.M. Jordan, R.~E. Mickens, and A.~Bejan.
\newblock A mathematical model for skin burn injury induced by radiation
  heating.
\newblock {\em International {J}ournal of {H}eat and {M}ass {T}ransfer},
  51:5497--5510, 2008.

\bibitem{kurt}
K.~O. Friedrichs and P.~Lax.
\newblock Systems of conservation equations with a convex extension.
\newblock {\em Proc. Nat. Acad. Sci. U.S.A.}, 68:1686--1688, 1971.

\bibitem{gonzalez}
O.~Gonz\'alez and A.~Stuart.
\newblock {\em A First Course in Continuum Mechanics}.
\newblock Cambridge University Press, 2010.

\bibitem{hump}
J.~Humpherys.
\newblock Admissibility of viscous-dispersive systems.
\newblock {\em Journal of Hyperbolic Differential Equations}, 2, No.
  4:963--974, 2005.

\bibitem{jor}
P.M. Jordan.
\newblock Second-sound phenomena in inviscid, thermally relaxing gases.
\newblock {\em Discrete Contin. Dyn. Syst. Ser. B}, 19:2189--2205, 2014.

\bibitem{prez}
D.~D. Joseph and L.~Preziosi.
\newblock Heat waves.
\newblock {\em Rev. Modern Phys.}, 61:41--73, 1989.

\bibitem{kakac}
S.~Kaka\c{c}, Y.~Yener, and C.~Naviera-Cotta.
\newblock {\em Heat {C}onduction}.
\newblock CRC Press, 2018.

\bibitem{katosym}
T.~Kato.
\newblock The {C}auchy {P}roblem for {Q}uasi-linear {S}ymmetric {H}yperbolic
  {S}ystems.
\newblock {\em Arch. Rational Mech. Anal}, 58:181--205, 1975.

\bibitem{kawa}
S.~Kawashima.
\newblock {\em Systems of a hyperbolic parabolic type with applications to the
  equations of magnetohydrodynamics}.
\newblock PhD thesis, Kyoto University, 1983.

\bibitem{ka}
S.~Kawashima and Y.~Shizuta.
\newblock On the normal form of the symmetric hyperbolic-parabolic systems
  associated with the conservation laws.
\newblock {\em Tohoku Math. J.}, 40:449--464, 1987.

\bibitem{otto}
H.~O. Kreiss and J.~Lorenz.
\newblock {\em Initial-{B}oundary {V}alue {P}roblems and the {N}avier-{S}tokes
  {E}quations}.
\newblock SIAM, 2004.

\bibitem{kawa2}
S.~Kwashima and Y.~Ueda.
\newblock Mathematical entropy and {E}uler-{C}attaneo-{M}axwell system.
\newblock {\em Analysis and Applications}, 14, No. 1:101--143, 2016.

\bibitem{lax}
P.~D. Lax.
\newblock {\em Hyperbolic {S}ystems of {C}onservation {L}aws and the
  {M}athematical {T}heory of {S}hock {W}aves}.
\newblock SIAM, 1973.

\bibitem{lebon}
G.~Lebon and A.~Cloot.
\newblock Benard-marangoni {I}nstability in a {M}axwell-{C}attaneo {f}luid.
\newblock {\em Physics Letters}, 105A:361--364, 1984.

\bibitem{liutan}
L.H. Liu, H.P. Tan, and T.W. Tong.
\newblock Non-fourier effects on transient temperature response in
  semitransparent medium caused by laser pulse.
\newblock {\em International {J}ournal of {H}eat and {M}ass {T}ransfer},
  44:3335, 2001.

\bibitem{maj}
A.~Majda.
\newblock {\em Compressible {F}luid {F}low and {S}ystems of {C}onservation
  {L}aws in {S}everal {S}pace {V}ariables}.
\newblock Springer-Verlag, 1984.

\bibitem{matsu}
A.~Matsumura.
\newblock {\em Initial value problems for some quasilinear partial differential
  equations in mathematical physics}.
\newblock PhD thesis, Kyoto University, 1980.

\bibitem{guy}
G.~M\'{e}tivier.
\newblock Remarks on the well-posedness of the non-linear cauchy problem.
\newblock {\em Contemporary Mathematics}, 368:337--356, 2005.

\bibitem{mitra}
K.~Mitra, S.~Kumar, A.~Vedevarz, and M.~K. Moallemi.
\newblock Experimental {E}vidence of {H}yperbolic {H}eat {C}onduction in
  {P}rocessed {M}eat.
\newblock {\em Trans. ASME J. Heat Transfer}, 117:568--573, 1995.

\bibitem{novo}
A.~Novotn\'y and I.~Stra\v skraba.
\newblock {\em Introduction to the {M}athematical {T}heory of {C}ompressible
  {F}luid {F}low}.
\newblock Oxford University Press, 2004.

\bibitem{qin}
Y.~Qin, Z.~Ma, and X.~Yang.
\newblock Exponential stability for nonlinear thermoelastic equations with
  second sound.
\newblock {\em Quart. Appl. Math.}, 11:2502--2513, 2010.

\bibitem{rack}
R.~Racke.
\newblock {\em Lectures on nonlinear evolution equations}.
\newblock Birkhauser, 2015.

\bibitem{rubin}
M.~B. Rubin.
\newblock Hyperbolic heat conduction and the second law.
\newblock {\em Int. J. Eng. Sci.}, 30:1665--1676, 1992.

\bibitem{serre}
D.~Serre.
\newblock {\em Conservation Laws 1: Hyperbolicity, Entropies, Shock Waves}.
\newblock Cambridge University Press, 2003.

\bibitem{ka1}
Y.~Shizuta and S.~Kawashima.
\newblock Systems of equations of hyperbolic-parabolic type with applications
  to the discrete {B}oltzmann equation.
\newblock {\em Hokkaido Math. J.}, 14:249--275, 1985.

\bibitem{smo}
J.~Smoller.
\newblock {\em Shock Waves and Reaction-Difussion Equations}.
\newblock Springer-Verlag, 1994.

\bibitem{stra}
B.~Straughan.
\newblock Acoustic waves in a {C}attaneo-{C}hristov gas.
\newblock {\em Phys. Lett. A}, 374:2667--2669, 2010.

\bibitem{stra2}
B.~Straughan.
\newblock Thermal convection with the cattaneo-christov model.
\newblock {\em Int. J. Heat Mass Transfer}, 53:95--98, 2010.

\bibitem{straughan2}
B.~Straughan.
\newblock {\em Heat {W}aves}.
\newblock Springer, 2011.

\bibitem{stra4}
B.~Straughan.
\newblock Tipping points in {C}attaneo-{C}hristov thermohaline convection.
\newblock {\em Proc. R. Soc. A}, 467:7--18, 2011.

\bibitem{franchi}
B.~Straughan and F.~Franchi.
\newblock B\`{e}nard convection and the cattaneo law of heat conduction.
\newblock {\em Proceeding of the {R}oyal {S}ociety of {E}dinburgh},
  96A:175--178, 1984.

\bibitem{tara}
M.~A. Tarabek.
\newblock On the existence of smooth solutions in one-dimensional
  thermoelasticity with second sound.
\newblock {\em Quart. Appl. Math.}, 50:727--742, 1992.

\bibitem{vad}
J.J. Vadasz, S.~Govender, and P.~Vadasz.
\newblock Heat transfer enhancement in nano-fluids suspensions: possible
  mechanisms and explanations.
\newblock {\em International {J}ournal of {H}eat and {M}ass {T}ransfer},
  48:2673, 2005.

\bibitem{weyl}
H.~Weyl.
\newblock Shock waves in arbitrary fluids.
\newblock {\em Communications on Pure and Applied Mathematics}, vol.
  2:103--122, 1949.

\bibitem{fuz}
F.~Zhang.
\newblock {\em Matrix {T}heory}.
\newblock Springer, 2011.

\bibitem{zhang}
Z.~Zhang.
\newblock {\em Nano/{M}icroscale {H}eat {T}ransfer}.
\newblock McGrawHill, 2007.

\end{thebibliography}

\end{document}